\DeclareMathAlphabet{\mathpzc}{OT1}{pzc}{m}{it}
\newtheorem{theorem}{Theorem}[section]
\newtheorem{remark}{Remark}[section]
\newcommand{\cH}{\mathcal{H}}
\newcommand{\cS}{\mathcal{S}}
\newcommand{\cM}{\mathcal{M}}
\newcommand{\cZ}{\mathcal{Z}}
\newcommand{\bM}{\mathbf{M}}
\newcommand{\bbR}{\mathbb{R}}
\newcommand{\bbE}{\mathbb{E}}
\newcommand{\bbN}{\mathbb{N}}
\newcommand{\bbP}{\mathbb{P}}
\newcommand{\ket}[1]{\left|#1\right\rangle}
\newcommand{\tr}[1]{\mathrm{Tr}\left\{#1 \right\}}
\newcommand{\lsim}{\mathrel{\hbox{\rlap{\lower.55ex \hbox{$\sim$}} \kern-.3em \raise.4ex \hbox{$<$}}}}
\newcommand{\hrho}{\hat{\rho}}
\newcommand{\mur}{\mu^{\tt rl}}
\newcommand{\mui}{\mu^{\tt id}}
\newcommand{\mr}{m^{\tt rl}}
\newcommand{\mi}{m^{\tt id}}
\newcommand{\rhoe}{\rho^{\tt e}}
\renewcommand{\i}{\iota}
\newcommand{\ba}{\text{\bf{a}}}
\newcommand{\bn}{\text{\bf{N}}}
\newcommand{\bid}{\text{\bf{I}}}
\title{Design and Stability of Discrete-Time Quantum Filters with  Measurement Imperfections
\thanks{This work has been partially supported by the ANR under the projects CQUID, QUSCO-INCA and EPOQ2, and by EU under the IP project AQUTE and the ERC project DECLIC} }
\author{Abhinav Somaraju
\thanks{Abhinav Somaraju, Brussels Photonics Team, Dept. of Applied Physics and Photonics, Vrije Universiteit Brussel, Pleinlaan 2-1050 Brussels, Belgium. {\tt\small a.somaraju@gmail.com}}
\and Igor Dotsenko
\thanks{Igor Dotsenko, Laboratoire Kastler Brossel, \'{E}cole Normale Sup\'{e}rieure, CNRS, Universit\'{e} P. et M. Curie, 24 rue Lhomond, 75231 Paris Cedex 05, France {\tt\small igor.dotsenko@lkb.ens.fr}}
\and Clement Sayrin
\thanks{Clement Sayrin, Laboratoire Kastler Brossel, \'{E}cole Normale Sup\'{e}rieure, CNRS, Universit\'{e} P. et M. Curie, 24 rue Lhomond, 75231 Paris Cedex 05, France {\tt\small clement.sayrin@lkb.ens.fr}}
\and  Pierre Rouchon
\thanks{Pierre Rouchon, Mines ParisTech, Centre Automatique et Syst\`{e}mes,
Unit\'{e} Math\'{e}matiques et Syst\`{e}mes,
60 Bd Saint-Michel, 75272 Paris cedex 06, France
{\tt\small pierre.rouchon@mines-paristech.fr}}%
 }
\begin{document}
\maketitle
\begin{abstract} This work considers the theory underlying a  discrete-time quantum filter recently used in a quantum feedback experiment.  It proves that  this  filter taking into account decoherence and measurement errors  is optimal and stable.   We present the general framework underlying this filter and  show that it corresponds to  a recursive  expression of  the least-square optimal estimation of the density operator in the presence of  measurement imperfections.  By measurement imperfections, we mean in a very general sense unread measurement performed by the environment (decoherence) and active measurement performed by non-ideal detectors. However, we assume to know precisely all the Kraus operators and also the detection error rates.  Such recursive expressions combine  well known methods from quantum filtering theory and classical probability theory (Bayes' law). We then demonstrate  that such a  recursive filter is always stable with respect to its initial condition: the fidelity between the optimal filter state (when the initial filter state coincides with the real quantum state)  and the filter state (when the initial filter state is arbitrary)  is a sub-martingale.
\end{abstract}

\section{Introduction}

The theory of filtering considers the estimation of  the system state from noisy and/or partial observations (see, e.g.,~\cite{Bensoussan1992}). For quantum systems, filtering theory  was initiated in the 1980s by Belavkin in a series of papers~\cite{Belavkin1980,Belavkin1983,Belavkin1988,Belavkin1992} (also see the tutorial papers~\cite{Bouten2009,Bouten2007} for a more recent introduction). Belavkin makes use of  the operational formalisms of Davies~\cite{Davies1976}, which is a precursor to the theory of quantum filtering. He has also realized that due to the unavoidable back-action  of  quantum measurements, the theory of filtering plays a fundamental role in quantum feedback control (see e.g.~\cite{Belavkin1992,Belavkin1983}). The theory of quantum filtering was independently developed in the physics community, particularly in the context of quantum optics, under the name of Quantum Measurement Theory~\cite{Braginsky1992,Haroche2006,Gardiner2000,Wiseman2009}.

Most of this theory has been  developed for continuous-time systems and little emphasis has been given to measurement imprecisions and their explicit impact on the filter design and time-recursive equations. To our knowledge, the problem of designing a quantum filter in the presence of classical measurement imperfections has not been examined in the discrete time setting. In this paper, we focus on this issue  and propose a systematic method to derive quantum filters taking into account several  detection error rates.

 In~\cite[Sec. 2.2.2]{Gardiner2000}, the authors discuss how the state of a quantum system evolves after a single imprecise measurement. In~\cite{Dotsenko2009}, a recursive quantum state estimation with  measurement imperfections has been considered. In~\cite{Sayrin2011}, such a quantum state estimate  has been  used in a quantum feedback  experiment that stabilizes photon-number states of a quantized field mode,  trapped in a super-conducting cavity.
We prove here that such estimates are in fact optimal since they coincide with the conditional expectation of the quantum state (density matrix) knowing the past detections, the error rates  and the initial quantum state.

Section~\ref{sec:measModel} describes the structure of a genuine  quantum  measurement model including detection error rates which is a  straightforward   generalization of the models considered in~\cite{Gardiner2000,Dotsenko2009,Sayrin2011}. This model may be  used in situations with partial knowledge of all the quantum jumps and also measurement errors of the jumps that are detected. However, they  assume to know precisely all the Kraus operators and also the detection error rates.

Section~\ref{sec:recEqn} is devoted to the first result in this paper summarized in Theorem~\ref{the:mainRes}: the conditional expectation of the quantum state  knowing the past detections and the initial state obeys a  recursive equation in each discrete time-step. This recursive equation is  given in~\eqref{eqn:mainRes}  and  depends  explicitly on the error rates.  The proof of Theorem~\ref{the:mainRes} shows that such  recursive equation may be derived by a simple application of Bayes' law.

In section~\ref{sec:stab}, we prove that the quantum filter defined in Theorem~\ref{the:mainRes} is stable versus its initial conditions: the fidelity between the optimal estimate conditioned on the initial state of the system being known and a second estimate in which the initial state is unknown is a sub-martingale. This stability result combines Theorem~\ref{the:mainRes} and~\cite{Rouchon2011}.  Note that stability does not imply convergence, in general. For convergence results in the continuous-time case see, e.g., \cite{Handel2009} and the references therein.

In section~\ref{sec:example}, we  describe in detail the Kraus operators and error rates  modeling the  discrete-time quantum system considered in the quantum feedback experiment~\cite{Sayrin2011}: the quantum filter used in the feedback loop corresponds  precisely to the recursive equation~\eqref{eqn:mainRes} given by Theorem~\ref{the:mainRes}; according to Theorem~\ref{the:stab}, this filter tends to forget its initial condition.
\subsection{Acknowledgments} 
The authors thank  Michel Brune, Serge Haroche, Mazyar Mirrahimi and  Jean-Michel Raimond for useful discussions and advices.

\section{Measurement Model}\label{sec:measModel}
In this section we discuss the  model  describing repeated and  imperfect  measurements on a quantum system. Such  modeling including  decoherence-induced quantum jumps and measurement errors is  a direct generalization of the one proposed   in~\cite{Dotsenko2009} and used in real-time for the quantum feedback experiments reported in~\cite{Sayrin2011} (also see~\cite{Gardiner2000}). We initially consider the case of a single ideal measurement and then develop the model to consider imperfect and repeated measurements. The final model is described in Subsection~\ref{sec:real}.
\subsection{Ideal Case}
Let $\cH$ be the system's Hilbert space with $\rho_1$ a density matrix denoting the initial state of the system at step $k=1$. We consider the evolution $\rho_1\mapsto \rho_2$ of such a quantum system under discrete-time quantum jumps (see e.g.~\cite[Ch. 4]{Haroche2006} or~\cite[Ch. 2]{Gardiner2000}).

Consider a set of Kraus operators $M_q:\cH\to\cH$, $q \in \{1,2\ldots,\mi\}$ that satisfy $\sum_{q=1}^{\mi} M_{q}^\dagger M_{q} = \bid$. Here we assume that there are $\mi\in\bbN$ possible quantum jumps  and $\bid$ is the identity operator on $\cH$. The  superscript {\tt id} stands for an abbreviation of ideal. Consider  an ideal world  with full access to all quantum jumps via a complete and ideal  set of  jump detectors. When the quantum jump indexed by $q$ is detected,  the state of the system changes to
\begin{equation}\label{eqn:rec1}
\rho_2 = \cM_q(\rho_1) \triangleq \frac{M_q\rho_1 M_q^\dagger}{\bbP[q]}
.
\end{equation}
Moreover,
\begin{equation}\label{eqn:recProb}
\bbP[q] = \tr{M_q \rho_1 M_q^\dagger}
\end{equation}
is the probability to   detect jump $q$, knowing the state $\rho_1$. We now consider the case of realistic experiments with possible measurement errors.

\subsection{Realistic Case (with imprecise measurements)}\label{sec:realSchPic}
We consider that the ideal detection of the jump $q$ corresponds to an ideal measure outcome $\mui=q$.
We denote by random variable $\mui\in \{1,2,\ldots,\mi\}$ this outcome provided by ideal sensors.
We assume that  realistic  sensors  provide an outcome  $\mur$ that is a random variable in the set
$ \{1,2,\ldots,\mr\}$.

 We assume that, with a known probability, an ideal measurement outcome $\mui$  occurs effectively whereas  the realistic sensors  detect  an outcome $\mur$. The correlations between the events $\mui=q$ and $\mur=p$ are modeled by classical probabilities through a stochastic matrix $\eta\in \bbR^{\mr\times \mi}$:
$$
\eta_{p,q}= \bbP[\mur=p|\mui=q].
$$
It gives the probability that the  real sensors detect  $\mur=p$ given the  ideal  sensors  would detect  $\mui=q$, $\textrm{for }p\in \{1,\ldots,\mr\},~q\in \{1,\ldots,\mi\}$. Since $\eta_{p,q}\geq 0$ and for each $q$, $\sum_{p=1}^{\mr} \eta_{p,q}=1$,
the  matrix $\eta=(\eta_{p,q}) $ is  a left  stochastic matrix.

\subsection{Realistic Experiment with Repeated Measurements}\label{sec:real} Consider the case of a sequence of discrete-time measurements.  We denote by $\rho_{k}$ the state of the system at discrete time-step $k$. Also suppose $M_{q;k}$ is the Kraus operator corresponding to the $k^{th}$ ideal measurement for  $q\in \{1,2,\ldots,\mi\}$. Note that we allow for a different set of Kraus operators $M_{\cdot;k}$ for different time-steps $k$. One can also consider $\mi$ and $\mr$ be dependent on $k$.

Similar to the previous subsection, we denote by $\mur_k\in \{1,\ldots,\mr\}$ and  $\mui_k\in \{1,\ldots,\mi\}$, the random variables corresponding to the $k^{th}$ realistic and ideal  outcomes, respectively. Therefore,
$$
\bbE[\rho_{k+1}|\rho_k,\mui_k=q] = \cM_{q;k}(\rho_k) \triangleq \frac{M_{q;k} \rho_k M_{q;k}^\dag}{\tr{M_{q;k} \rho_k M_{q;k}^\dag}}
$$
and $ \bbP\big[\mui_k=q~\big|~\rho_k \big] = \tr{M_{q;k} \rho_k M_{q;k}^\dag}$.

Also, we assume $\eta^k\in \bbR^{\mr\times \mi}$, the stochastic matrix determining the probability of error, can  depend on the discrete-time step $k$. In particular, we have
$$
\bbP[\mur_k=p|\mui_k=q]= \eta^k_{p,q},
$$
$\textrm{for }p\in \{1,\ldots,\mr\},~q\in \{1,\ldots,\mi\}$.

\section{Recursive Equation for the optimal filter}\label{sec:recEqn}
We wish to obtain a recursive equation for the optimal estimate $\hrho_{k+1}$ of the state $\rho_{k+1}$ knowing initial value $\rho_1$ and  real measurement outcomes $\mur_1,\ldots,\mur_{k}$.  This  optimal estimate $\hrho_k$ is defined as
$$
\hrho_{k} = \bbE[\rho_{k}|\rho_1,\mur_1,\ldots,\mur_{k-1}].
$$

The following theorem says that we can  ignore the original state $\rho_k$ and only consider  $\hat\rho_k$
that is shown  to be the state of a Markov process.
\begin{theorem}\label{the:mainRes}The optimal estimate $\hrho_k$ satisfies the following recursive equation
\begin{equation}\label{eqn:mainRes}
\hrho_{k+1} = \frac{\sum_{q=1}^{\mi} \eta^k_{p_k,q} M_{q;k} \hrho_{k}M_{q;k}^\dag }{ \tr{\sum_{q=1}^{\mi} \eta^k_{p_k,q} M_{q;k} \hrho_{k}M_{q;k}^\dag }},
\end{equation}
if $\mur_k = p_k$.
Moreover, we have
\begin{align}
&\bbP\Big[\mur_k=p_k\Big|\rho_1,\mur_1=p_1,\ldots, \mur_{k-1}=p_{k-1}\Big]\nonumber\\
    &\qquad =  \tr{\sum_{q=1}^{\mi} \eta^k_{p_k,q} M_{q;k} \hrho_{k}M_{q;k}^\dag}.\label{eqn:mainResProba}
\end{align}
\end{theorem}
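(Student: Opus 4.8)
The plan is to prove the theorem by induction on $k$, carrying as the induction hypothesis the stronger statement that $\hrho_k = \bbE[\rho_k \mid \rho_1, \mur_1, \dots, \mur_{k-1}]$ together with the fact that, conditioned on $\rho_1$ and the past real outcomes, the relevant sufficient statistic for the future is exactly $\hrho_k$ (i.e.\ the process $\hrho_k$ is Markov). The base case $k=1$ is trivial since $\hrho_1 = \rho_1$ by definition. For the inductive step, the key observation is that all the randomness entering $\rho_{k+1}$ and $\mur_k$ factors through the ideal outcome $\mui_k$, so I would condition on $\mui_k = q$ first and then marginalize.

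The central computation proceeds as follows. First I would write, using the tower property and the definition of $\hrho_k$,
\[
\bbE\big[\rho_{k+1} \,\big|\, \rho_1, \mur_1, \dots, \mur_k\big] = \sum_{q=1}^{\mi} \bbE\big[\rho_{k+1}\,\big|\,\rho_k,\mui_k = q\big]\; \bbP\big[\mui_k = q \,\big|\, \rho_1, \mur_1,\dots,\mur_k\big],
\]
where I have used that, given $\mui_k = q$, the post-jump state $\cM_{q;k}(\rho_k)$ is linear in $\rho_k$ so the conditional expectation passes through to give $\cM_{q;k}(\hrho_k)$ after noting $\mur_k$ is conditionally independent of $\rho_k$ given $\mui_k$. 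The denominators in $\cM_{q;k}$ cancel against the jump probabilities, which is the usual quantum-filtering algebra. The remaining ingredient is Bayes' law applied to the classical channel $\eta^k$: since $\bbP[\mur_k = p_k \mid \mui_k = q, \text{past}] = \eta^k_{p_k,q}$ does not depend further on the past,
\[
\bbP\big[\mui_k = q \,\big|\, \rho_1,\mur_1,\dots,\mur_{k-1}, \mur_k = p_k\big] = \frac{\eta^k_{p_k,q}\, \bbP[\mui_k = q \mid \rho_1, \mur_1, \dots, \mur_{k-1}]}{\sum_{q'} \eta^k_{p_k,q'}\, \bbP[\mui_k = q' \mid \rho_1,\dots,\mur_{k-1}]},
\]
and $\bbP[\mui_k = q \mid \rho_1,\dots,\mur_{k-1}] = \tr{M_{q;k}\hrho_k M_{q;k}^\dag}$ by the induction hypothesis and the tower property. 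Substituting these two facts and simplifying the cancellations yields exactly~\eqref{eqn:mainRes}, and the normalizing constant that appears is precisely~\eqref{eqn:mainResProba}.

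The main obstacle I anticipate is not any single computation but the careful bookkeeping of conditional independences: one must justify rigorously that $\mur_k$ is independent of $\rho_k$ (and of all past real outcomes) given $\mui_k$, that $\bbE[\rho_{k+1}\mid\rho_k,\mui_k=q,\text{past}] = \bbE[\rho_{k+1}\mid\rho_k,\mui_k=q]$, and that after conditioning on $\mur_k = p_k$ the resulting object is again a deterministic function of $\hrho_k$ alone (closing the Markov property for the next step of the induction). These follow from the model structure in Section~\ref{sec:real}, but stating them cleanly — ideally by exhibiting the joint law of $(\rho_{k+1},\mui_k,\mur_k)$ given $(\rho_k,\text{past})$ and then integrating out $\rho_k$ against its conditional law $\hrho_k$ — is where the proof must be written with care. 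Once that dependency structure is pinned down, formulas~\eqref{eqn:mainRes} and~\eqref{eqn:mainResProba} drop out by linearity of $\cM_{q;k}$ in its argument and a single application of Bayes' rule.
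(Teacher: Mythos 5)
Your proposal is correct, but it organizes the argument differently from the paper. The paper's proof is a \emph{global} Bayes computation: it expands $\hrho_{k+1}$ as a sum over all ideal-jump trajectories $(q_1,\ldots,q_k)$, applies Bayes' law once at the trajectory level to get the explicit closed form~\eqref{eq:rhohatbis} in terms of $\rho_1$, reads off the recursion from that formula, and then derives~\eqref{eqn:mainResProba} by taking ratios of the trajectory-level probabilities~\eqref{eq:Preal}. You instead run a \emph{one-step} induction on $k$, conditioning through $\mui_k$, applying Bayes only to the classical channel $\eta^k$, and using $\bbP[\mui_k=q\,|\,\rho_1,\mur_1,\ldots,\mur_{k-1}]=\tr{M_{q;k}\hrho_k M_{q;k}^\dag}$ from the induction hypothesis. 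The trade-off is exactly the one you flag yourself: the paper's route avoids all conditional-independence bookkeeping because every quantity is an explicit product of $\eta$'s and traces, at the price of manipulating large trajectory sums; your route delivers the recursion and~\eqref{eqn:mainResProba} directly and is closer in spirit to standard filtering derivations, but your first display does not typecheck as written --- $\bbE[\rho_{k+1}\,|\,\rho_k,\mui_k=q]=\cM_{q;k}(\rho_k)$ is $\rho_k$-measurable and $\cM_{q;k}$ is nonlinear, and the conditional law of $\rho_k$ is tilted by conditioning on $\mui_k=q$ and $\mur_k=p_k$. This is repaired precisely by the cancellation you invoke: the tilting weight $\eta^k_{p_k,q}\tr{M_{q;k}\rho_k M_{q;k}^\dag}$ cancels the normalizing denominator of $\cM_{q;k}(\rho_k)$, so the relevant conditional expectation of the \emph{unnormalized} update $M_{q;k}\rho_k M_{q;k}^\dag$ is linear in $\rho_k$ and passes to $\hrho_k$; writing the step in terms of $\bbE[\rho_{k+1}\mathbf{1}_{\mur_k=p_k}\,|\,\rho_1,\mur_1,\ldots,\mur_{k-1}]=\sum_q\eta^k_{p_k,q}M_{q;k}\hrho_k M_{q;k}^\dag$ and its trace gives both~\eqref{eqn:mainRes} and~\eqref{eqn:mainResProba} in one stroke. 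Since you identify this as the point requiring care and propose the right fix (exhibit the joint law of $(\rho_{k+1},\mui_k,\mur_k)$ given the past and integrate out $\rho_k$), I regard your proposal as a valid alternative proof rather than a gapped one.
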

\begin{remark}
The division in~\eqref{eqn:mainRes} by the R.H.S of~\eqref{eqn:mainResProba} could appear problematic  when this denominator vanishes. Nevertheless, if we assume that the real measurements are $\mur_1=p_1,\ldots,\mur_k=p_k$, then $$\bbP\Big[ \mur_1=p_1,\ldots, \mur_{k-1}=p_{k-1} \Big|\rho_1\Big] >0$$ and $$\bbP\Big[ \mur_1=p_1,\ldots, \mur_k=p_k \Big|\rho_1\Big] >0$$  (otherwise such measurement outcomes are not possible). Consequently,
\begin{align*}
&\bbP\Big[\mur_k=p_k\Big|\rho_1,\mur_1=p_1,\ldots, \mur_{k-1}=p_{k-1}\Big]\\
&\qquad = \tfrac{\bbP\Big[ \mur_1=p_1,\ldots, \mur_{k-1}=p_{k-1},\mur_k=p_k \Big|\rho_1\Big]}{\bbP\Big[ \mur_1=p_1,\ldots, \mur_{k-1}=p_{k-1} \Big|\rho_1\Big]}
\end{align*}
cannot vanish. Thus recurrence~\eqref{eqn:mainRes} is always well defined because we have~\eqref{eqn:mainResProba}.
\end{remark}
\begin{remark}[Markov property of the filter] Equations~\eqref{eqn:mainRes} and~\eqref{eqn:mainResProba} tell us that the joint-process $(\mur_k,\hrho_k)$ is a Markov process and therefore the statistics of the measurement process $\mur_k$ may be determined using $\hrho_k$. This in particular implies that we may use Monte Carlo methods to simulate the observation process $\mur_k$ only using $\hrho_k$ independent of the actual state $\rho_k$ and measurement history $\mur_1,\ldots,\mur_{k-1}$.
\end{remark}
\begin{IEEEproof}
In this proof we use the following notation for ease of presentation: we use $\mui_\i=q_\i$ and $\mur_\i=p_\i$ to denote the set of events $\left\{\mui_1=q_1,\mui_2 = q_2,\ldots,\mui_k =q_k\right\}$ and $\left\{\mur_1=p_1,\mur_2 = p_2,\ldots,\mur_k = p_k\right\}$, respectively.   For instance, using this notation, we have
{\small\begin{eqnarray*}
\lefteqn{\bbP\Big[\mui_\i = q_\i\Big|\rho_1,\mur_\i = p_\i\Big]}\\ &\triangleq& \bbP\Big[\mui_1=q_1,\ldots,\mui_k=q_k\Big|\rho_1,\mur_1=p_1,\ldots, \mur_k=p_k\Big].
\end{eqnarray*}
}
Assume that the  values measured by the real detector are $p_1=\mur_1$, \ldots, $p_k=\mur_{k}$.
Then we have the optimal estimate
{\small
\begin{multline}\label{eq:rhohat}
\lefteqn{\hrho_{k+1} =}\\ \sum_{q_1,\ldots,q_k=1}^{\mi}
\bbP\Big[\mui_\i = q_\i\Big|\rho_1,\mur_\i = p_\i \Big]\cM_{q_k;k}(\ldots(\cM_{q_1;1}(\rho_1)) \ldots ),
\end{multline}}
where
\begin{align}
&{\cM_{q_k;k} (\cdots (\cM_{q_1;1}(\rho_1)) \cdots )}\nonumber\\
&\qquad=\frac{M_{q_k;k} \cdots M_{q_1;1} \rho_1 M_{q_1;1}^\dag \cdots M_{q_k;k}^\dag}
  {\tr{M_{q_k;k} \cdots M_{q_1;1} \rho_1 M_{q_1;1}^\dag \cdots M_{q_k;k}^\dag}}.\label{eq:mqq}
\end{align}
Using Bayes law, we have for each $(q_1,\ldots,q_k)$,
\begin{multline}\label{eq:BigBayes}
    \bbP\Big[\mui_\i = q_\i\Big|\rho_1,\mur_\i = p_\i\Big]
      \bbP\Big[ \mur_\i=p_\i\Big| \rho_1\Big]
    \\=
     \bbP\Big[\mur_\i=p_\i\Big|\rho_1,\mui_\i=q_\i\Big]
      \bbP\Big[ \mui_\i=q_\i\Big| \rho_1\Big],
\end{multline}
where
\begin{align*}
&\bbP\Big[ \mui_\i=q_\i\Big| \rho_1\Big]   = \tr{M_{q_k;k} \ldots M_{q_1;1} \rho_1 M_{q_1;1}^\dag \ldots M_{q_k;k}^\dag}
\end{align*}
and
\begin{align*}
\bbP\Big[\mur_\i=p_\i\Big|\rho_1,\mui_\i=q_\i\Big] &=
\bbP\Big[\mur_\i=p_\i\Big|\mui_\i=q_\i\Big] \\
&= \eta^1_{p_1,q_1} \cdots \eta^k_{p_k,q_k}.
\end{align*}
Summing~\eqref{eq:BigBayes} over all $(q_1,\ldots,q_k)$ gives:
\begin{multline} \label{eq:Preal}
\bbP\Big[ \mur_\i =p_\i\Big| \rho_1 \Big] =
\sum_{s_1,\ldots,s_k=1}^{\mi} \eta^1_{p_1,s_1} \ldots \eta^k_{p_k,s_k}\times\\
\tr{M_{s_k;k} \ldots M_{s_1;1} \rho_1 M_{s_1;1}^\dag \ldots M_{s_k;k}^\dag}.
\end{multline}

Consequently, we have

{\scriptsize{
\begin{align*}
&\bbP\Big[\mui_\i=q_\i\Big| \rho_1,\mur_\i=p_\i\Big]
\\
&=\frac{\eta^1_{p_1,q_1} \ldots \eta^k_{p_k,q_k}\tr{M_{q_k;k} \ldots M_{q_1;1} \rho_1 M_{q_1;1}^\dag \ldots M_{q_k;k}^\dag}}
{\tr{ \sum_{s_1,\ldots,s_k=1}^{\mi} \eta^1_{p_1,s_1} \ldots \eta^k_{p_k,s_k}
M_{s_k;k} \ldots M_{s_1;1} \rho_1 M_{s_1;1}^\dag \ldots M_{s_k;k}^\dag
}}.
\end{align*}}}
Injecting the above relation into~\eqref{eq:rhohat} and using~\eqref{eq:mqq} yields
{\scriptsize
\begin{multline}
    \hrho_{k+1} =\\
    \frac{\sum_{q_1,\ldots,q_k} \eta^1_{p_1,q_1} \ldots \eta^k_{p_k,q_k}
M_{q_k;k} \cdots M_{q_1;1} \rho_1 M_{q_1;1}^\dag \cdots M_{q_k;k}^\dag}
{\tr{ \sum_{q_1,\ldots,q_k} \eta^1_{p_1,q_1} \ldots \eta^k_{p_k,q_k}
M_{q_k;k} \cdots M_{q_1;1} \rho_1 M_{q_1;1}^\dag \ldots M_{q_k;k}^\dag}}.
\label{eq:rhohatbis}
\end{multline}  }
It is then clear that $\hrho_{k+1}$ can be calculated from $\rho_1$ in a recursive manner according to
\begin{eqnarray*}
&\hrho_{k+1}= \frac{\sum_{q}  \eta^k_{p_k,q} M_{q;k}\hrho_{k}  M_{q;k}^\dag }
          {\tr{ \sum_{q}  \eta^k_{p_k,q} M_{q;k}\hrho_{k}  M_{q;k}^\dag}},&\\ &\vdots&\\
&\hrho_{2}= \frac{\sum_{q}  \eta^1_{p_1,q} M_{q;1}\hrho_{1}  M_{q;1}^\dag }
          {\tr{ \sum_{q}  \eta^1_{p_1,q} M_{q;1}\hrho_{1}  M_{q;1}^\dag}}&,
\end{eqnarray*}
where the intermediate states correspond to optimal estimates  from step $2$ to $k$:
 $\hrho_j =  \bbE[\rho_j|\rho_1,\mur_1=p_1,\ldots,\mur_{j-1}=p_{j-1}]$, $j=2,\ldots, k$. The recursive relation~\eqref{eqn:mainRes} is thus proved.

We now prove~\eqref{eqn:mainResProba}. In the following, we set ordered product
\begin{eqnarray*}
\vec{\bM}_{j} &\triangleq& M_{q_j;j}\cdots M_{q_2;2}\cdot M_{q_1;1}
\end{eqnarray*}
for $j=1,\ldots,k$.

We have
\begin{align*}
&\bbP\Big[\mur_k=p_k\Big|\rho_1,\mur_1=p_1,\ldots, \mur_{k-1}=p_{k-1}\Big]\\
&\qquad=\tfrac{\bbP\Big[\mur_1=p_1,\ldots, \mur_{k}=p_{k}\Big|\rho_1\Big]}
{\bbP\Big[\mur_1=p_1,\ldots, \mur_{k-1}=p_{k-1}\Big|\rho_1\Big] }.
\end{align*}
This fraction can be computed  using~\eqref{eq:Preal}:
{\scriptsize
\begin{align*}
&\bbP\Big[\mur_k=p_k\Big|\rho_1,\mur_1=p_1,\ldots, \mur_{k-1}=p_{k-1}\Big]\\
&\qquad= \sum_{q_1,\ldots,q_k} \eta^1_{p_1,q_1} \ldots \eta^k_{p_k,q_k}
\tr{\vec{\bM}_{k} \rho_1 \vec{\bM}_{k}^\dag}\times\\
&\qquad\left(\sum_{q_1,\ldots,q_{k-1}} \eta^1_{p_1,q_1} \ldots \eta^{k-1}_{p_{k-1},q_{k-1}}
\tr{\vec{\bM}_{k-1}\rho_1\vec{\bM}_{k-1}^\dag}
        \right)^{-1}.
\end{align*}}
According to~\eqref{eq:rhohatbis} with $k-1$ instead of $k$, we have
{\scriptsize
\begin{align*}
&\hrho_{k} = \sum_{q_1,\ldots,q_{k-1}} \eta^1_{p_1,q_1} \ldots \eta^k_{p_{k-1},q_{k-1}}
\vec{\bM}_{k-1} \rho_1 \vec{\bM}_{k-1}^\dag\times\\
&\qquad\tr{\sum_{q_1,\ldots,q_{k-1}} \eta^1_{p_1,q_1} \ldots \eta^{k-1}_{p_{k-1},q_{k-1}}
\tr{\vec{\bM}_{k-1}\rho_1\vec{\bM}_{k-1}^\dag}}^{-1}.
\end{align*}
}
Since
{\scriptsize
\begin{align*}
&\sum_{q_1,\ldots,q_k} \eta^1_{p_1,q_1} \ldots \eta^k_{p_k,q_k}
\tr{\vec{\bM}_{k} \rho_1 \vec{\bM}_{k}^\dag} =  \sum_{q_k} \eta^k_{p_k,q_k}\times\\
&\tr{ M_{q_k;k} \left(\sum_{q_1,\ldots,q_{k-1}} \!\!\!\!\!\!\eta^1_{p_1,q_1} \ldots \eta^{k-1}_{p_{k-1},q_{k-1}}
\vec{\bM}_{k-1} \rho_1 \vec{\bM}_{k-1}^\dag\right)  M_{q_k;k}^\dag },
\end{align*}}
we get  finally~\eqref{eqn:mainResProba}.

 \end{IEEEproof}

\section{Stability with respect to initial conditions}\label{sec:stab}

Assume that we do not have access to the real initial state $\rho_1$. We cannot compute the  optimal estimate $\hrho_k$. We can still use the recurrence formula~\eqref{eqn:mainRes} based on the real measurement outcomes  $(\mur_{j}=p_{j})_{j=1,\ldots,k-1}$  to propose an estimation $\rhoe_k$ of $\rho_k$. We will prove  below that this estimation procedure is stable in the sense that the fidelity between $\hrho_k$ and $\rhoe_k$ is non-decreasing in average whatever the initial condition $\rhoe_1$ is.

For ease of notation we set
\begin{align}
M_{p,q;k} &= \sqrt{\eta^k_{p,q}}M_{q;k}, \notag
\\
{\cM}_{p;k}(\rho) &= \frac{\sum_{q=1}^{\mi} M_{p,q;k} \rho M_{p,q;k}^\dag }{\sum_{q=1}^{\mi} \tr{M_{p,q;k} \rho M_{p,q;k}^\dag }}
\label{eq:cMp}
\end{align}
for any $k\geq 1$, $p\in\{1,\ldots,\mr\}$ and $q\in\{1,\ldots,\mi\}$.
The sets $$\cS_{p;k} \triangleq \{M_{p,1;k},\ldots,M_{p,\mi;k}\}$$ for $p=1,\ldots,\mr$ form a partition of
$$\cS_k \triangleq \{M_{p,q;k}~|~ p=1,\ldots,\mr,~q=1,\ldots,\mi\}.
$$
Using this notation, the recursive Equation~\eqref{eqn:mainRes} defines a coarse-grained Markov chain in the sense of~\cite{Rouchon2011}.

If $\mur_k = p$,  we define $\rhoe_k$ recursively for $k\geq 1$ as follows
\begin{equation}\label{eqn:recRel}
\rhoe_{k+1}= \cM_{p;k}\big(\rhoe_{k}\big)
.
\end{equation}
Such a recursive formula is valid as soon as $\sum_{q} \tr{M_{p,q;k} \rhoe_{k}M_{p,q;k}^\dag }>0$, which is automatically satisfied when $\rhoe_k$ is of full rank. $\rhoe_{k+1}$ is indeterminate when $\sum_{q} \tr{M_{p,q;k} \rhoe_{k}M_{p,q;k}^\dag }=0$. But using the continuity arguments  developed at the end of the appendix we can give a value for $\rhoe_{k+1}$ in the following way: for each $\rhoe_k$, consider the set of density operators
$$
\left( \cM_{p;k}\big(\rhoe_{k,\epsilon}\big) \right)_{p=1,\ldots,\mr},
$$
where $\epsilon >0$ and $\rhoe_{k,\epsilon}=(\rhoe_k + \epsilon \bid)/\tr{\rhoe_k + \epsilon \bid}$. Since $\rhoe_{k,\epsilon}$ is positive definite, $\left( \cM_{p;k}\big(\rhoe_{k,\epsilon}\big) \right)_{p=1,\ldots,\mr}$ are well defined and admit a limit point when $\epsilon$ tends to $0^+$ ($\cH$ is of finite dimension here). Take for each $\rhoe_k$ such a limit point $(\rhoe_{k+1,p})_{p=1,\ldots,\mr}$. Set $\rhoe_{k+1}= \rhoe_{k+1,p}$ when $\mur_k = p$. If $\sum_{q} \tr{M_{p,q;k} \rhoe_{k}M_{p,q;k}^\dag }>0$ then $\rhoe_{k+1,p}$ coincides necessarily with $\cM_{p;k}\big(\rhoe_{k}\big)$ and we recover~\eqref{eqn:recRel}.  During the proof of Theorem~\ref{the:stab}, we will use only recurrence~\eqref{eqn:recRel} having in mind that, when $\sum_{q} \tr{M_{p,q;k} \rhoe_{k}M_{p,q;k}^\dag }=0$ we have to use $\rhoe_{k+1}=\rhoe_{k+1,p}$.

If  the initial state of the system $\rhoe_1$ coincides with $\rho_1$, then $\rhoe_k$ coincides with  the optimal estimate $\hrho_k$ of the state $\rho_k$ from Theorem~\ref{the:mainRes}. In fact, once the initial states $\hrho_1$ and $\rhoe_1$ are given,  the process $(\hrho_k)$ and $(\rhoe_k)$ are  driven by the same  stochastic process $(\mur_{k-1})$, itself   driven by the combination of  the original quantum process of state $(\rho_{k})$ with  the classical  process associated to the  left stochastic matrices $(\eta^k)$ governing detection errors.

 The following theorem shows that the fidelity between $\hrho_k$ and  $\rhoe_k$ is non-decreasing in average.
\begin{theorem}\label{the:stab} Suppose $\rhoe_k$ satisfies the recursive relation~\eqref{eqn:recRel} with an arbitrary initial density operator $\rhoe_1$.  Then the fidelity between $\hrho_k$ and $\rhoe_k$ defined by
$$
F\big(\hrho_k,\rhoe_k\big)\triangleq \left(\tr{\sqrt{\sqrt{\hrho_k}\rhoe_k\sqrt{\hrho_k}}}\right)^2
.
$$
is a submartingale in the following sense:
$$\bbE\bigg[F\big(\hrho_{k+1},\rhoe_{k+1}\big)\bigg|\hrho_1,\ldots,\hrho_k,\rhoe_1,\ldots,\rhoe_k\bigg] \geq F\big(\hrho_k,\rhoe_k\big).$$
\end{theorem}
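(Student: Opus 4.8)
\emph{Proof plan.} The plan is to recognise that the recursion~\eqref{eqn:recRel}, together with the partition $\cS_k=\bigsqcup_{p}\cS_{p;k}$, turns the pair $(\hrho_k,\rhoe_k)$ into a coarse-grained quantum Markov chain driven by the single stochastic process $(\mur_k)$, and then to invoke the fidelity sub-martingale theorem of~\cite{Rouchon2011}; below I also sketch the self-contained argument. First I would check the two structural facts that put us in that framework: since $\eta^k$ is left stochastic and $\sum_q M_{q;k}^\dag M_{q;k}=\bid$, the enlarged family $\cS_k=\{M_{p,q;k}\}$ satisfies $\sum_{p,q}M_{p,q;k}^\dag M_{p,q;k}=\bid$, and $\{\cS_{p;k}\}_{p=1}^{\mr}$ partitions it; the $k$-th measurement reveals only the block label $p=\mur_k$ and sends any state to $\cM_{p;k}(\cdot)$ as in~\eqref{eq:cMp}. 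By Theorem~\ref{the:mainRes} (equation~\eqref{eqn:mainResProba}, which in the present notation reads $\sum_q\tr{M_{p,q;k}\hrho_k M_{p,q;k}^\dag}$) together with the Markov-property remark, conditionally on $\hrho_1,\dots,\hrho_k,\rhoe_1,\dots,\rhoe_k$ the outcome $\mur_k=p$ has probability $\Pi_p:=\sum_{q=1}^{\mi}\tr{M_{p,q;k}\hrho_k M_{p,q;k}^\dag}$, after which $\hrho_{k+1}=\cM_{p;k}(\hrho_k)$ and $\rhoe_{k+1}=\cM_{p;k}(\rhoe_k)$. Hence the asserted sub-martingale property is equivalent to the one-step inequality
$$
\sum_{p=1}^{\mr}\Pi_p\,F\!\big(\cM_{p;k}(\hrho_k),\cM_{p;k}(\rhoe_k)\big)\ \geq\ F(\hrho_k,\rhoe_k),
$$
and one notes $\sum_p\Pi_p=\tr{\hrho_k}=1$.

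Before proving this I would dispose of the degenerate case: if $\rhoe_k$ is not of full rank I replace it by $\rhoe_{k,\epsilon}$, for which every $\cM_{p;k}(\rhoe_{k,\epsilon})$ is well defined and $\bar r_p:=\sum_q\tr{M_{p,q;k}\rhoe_{k,\epsilon}M_{p,q;k}^\dag}>0$ with $\sum_p\bar r_p=1$, prove the inequality there, and then let $\epsilon\to0^+$ using the joint continuity of $F$ and the limit-point definition of $\rhoe_{k+1}$ (the continuity argument of the appendix). Outcomes with $\Pi_p=0$ contribute nothing and are discarded.

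For the one-step inequality I would argue by purification, as in~\cite{Rouchon2011}. By Uhlmann's theorem choose purifications $\ket{\psi},\ket{\phi}$ of $\hrho_k,\rhoe_k$ in $\cH\otimes\cH_a$ with $\langle\psi|\phi\rangle$ real and equal to $\sqrt{F(\hrho_k,\rhoe_k)}$. The completeness relation makes $V=\sum_{p,q}(M_{p,q;k}\otimes\bid_a)\otimes\ket{p}\otimes\ket{q}$ an isometry into $\cH\otimes\cH_a\otimes\cH_B\otimes\cH_C$, and $V\ket{\psi}=\sum_p\ket{\Psi_p}\otimes\ket{p}$ where $\ket{\Psi_p}=\sum_q(M_{p,q;k}\otimes\bid_a)\ket{\psi}\otimes\ket{q}$ is a subnormalised purification of $\sum_q M_{p,q;k}\hrho_k M_{p,q;k}^\dag=\Pi_p\,\cM_{p;k}(\hrho_k)$, of squared norm $\Pi_p$; the analogous vectors $\ket{\Phi_p}$ built from $\ket{\phi}$ have squared norm $\bar r_p$. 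Since $V$ is an isometry, $\langle\psi|\phi\rangle=\sum_p\langle\Psi_p|\Phi_p\rangle$, so
$$
\sqrt{F(\hrho_k,\rhoe_k)}=\Big|\sum_p\langle\Psi_p|\Phi_p\rangle\Big|\ \leq\ \sum_p\sqrt{\Pi_p\,\bar r_p}\,\sqrt{F\!\big(\cM_{p;k}(\hrho_k),\cM_{p;k}(\rhoe_k)\big)},
$$
using the triangle inequality and then Uhlmann's theorem for the normalised purifications $\ket{\Psi_p}/\sqrt{\Pi_p}$ and $\ket{\Phi_p}/\sqrt{\bar r_p}$. Writing $\sqrt{\Pi_p\,\bar r_p\,F_p}=\sqrt{\bar r_p}\cdot\sqrt{\Pi_p F_p}$ and applying Cauchy--Schwarz with $\sum_p\bar r_p=1$ gives $\sqrt{F(\hrho_k,\rhoe_k)}\leq\big(\sum_p\Pi_p F_p\big)^{1/2}$, which is the required one-step inequality, hence the sub-martingale property.

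The main obstacle, beyond these routine verifications, is getting the coarse-graining right inside the purification step: unlike the fully resolved case, each block operation $\rho\mapsto\sum_q M_{p,q;k}\rho M_{p,q;k}^\dag$ is not rank one, so one must carry the extra register $\ket{q}$ and check that $\ket{\Psi_p}$ really purifies $\Pi_p\,\cM_{p;k}(\hrho_k)$; this is precisely the point handled in~\cite{Rouchon2011}, so I expect the cleanest write-up to cite its main theorem after verifying its hypotheses (the completeness relation and the partition), keeping the purification computation only as a remark. The remaining delicate point is the measure-zero bookkeeping when $\rhoe_k$ is rank deficient, which is why the $\epsilon$-regularisation and the continuity argument of the appendix are needed; and Step~1 quietly uses Theorem~\ref{the:mainRes} to transfer the conditional law of $\mur_k$ off the inaccessible $\rho_k$ and onto $\hrho_k$.
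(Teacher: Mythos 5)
Your proposal is correct and follows essentially the same route as the paper: expand the conditional expectation over the outcomes $p$ with weights given by~\eqref{eqn:mainResProba}, note that both filters jump via $\cM_{p;k}$, and conclude by the coarse-grained fidelity inequality~\eqref{eq:appendix} of~\cite{Rouchon2011} applied to the operators $M_{p,q;k}$ and the partition indexed by $p$, with the rank-deficient case handled by the $\epsilon$-regularisation and continuity argument of the appendix. The only addition is your purification/Uhlmann--Cauchy--Schwarz sketch of that inequality, which the paper simply cites rather than reproves.
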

\begin{proof}
Denote $F_k=F\big(\hrho_k,\rhoe_k\big)$.  We have
\begin{eqnarray*}
\lefteqn{\bbE\bigg[F_{k+1}\bigg|\hrho_1,\ldots,\hrho_k,\rhoe_1,\ldots,\rhoe_k\bigg]}\\
 &=&\bbE\bigg[F_{k+1}\bigg|\hrho_1,\rhoe_1, \mur_1=p_1,\ldots, \mur_{k-1}=p_{k-1}\bigg] \\
&=& \sum_{p=1}^{\mr} \bbP\bigg[ \mur_k=p \bigg| \hrho_1,\rhoe_1,\mur_1=p_1,\ldots, \mur_{k-1}=p_{k-1}\bigg]\times \\
  && \bbE\bigg[F_{k+1}\bigg|  \hrho_1,\rhoe_1, \mur_1=p_1,\ldots, \mur_{k-1}=p_{k-1},\mur_{k}=p\bigg].
\end{eqnarray*}
The conditional probabilities appearing in this sum are given by~\eqref{eqn:mainResProba} and the conditional expectations read
\begin{multline*}
{\bbE\bigg[F_{k+1}\bigg|\hrho_1,\rhoe_1,\mur_1=p_1,\ldots, \mur_{k-1}=p_{k-1},\mur_{k}=p\bigg]}\\
\qquad=F\left({\cM}_{p;k}(\hrho_k)~,~{\cM}_{p;k}(\rhoe_k)\right)
\end{multline*}
since, once $\hrho_1$ and $\rhoe_1$ and $\mur_1=p_1$,\ldots, $\mur_{k-1}=p_{k-1}$ and $\mur_{k}=p$ are given, $\hrho_{k+1}={\cM}_{p;k}(\hrho_k)$ and
$\rhoe_{k+1}={\cM}_{p;k}(\rhoe_k)$. Thus we have
\begin{align*}
&\bbE\bigg[F_{k+1}\bigg|\hrho_1,\ldots,\hrho_k,\rhoe_1,\ldots,\rhoe_k\bigg]=
 \\
&\qquad\sum_{p=1}^{\mr} \left(\text{\scriptsize $\sum_q \tr{M_{p,q;k} \hrho_k M_{p,q;k}^\dag}$}  \right)\times\\
& \qquad F\left(\tfrac{\sum_{q} M_{p,q;k} \hrho_k M_{p,q;k}^\dag }{\sum_{q} \tr{M_{p,q;k} \hrho_k M_{p,q;k}^\dag }}~,~
  \tfrac{\sum_{q} M_{p,q;k} \rhoe_k M_{p,q;k}^\dag }{\sum_{q} \tr{M_{p,q;k} \rhoe_k M_{p,q;k}^\dag }}\right)
  .
\end{align*}
The fact that $\bbE\bigg[F_{k+1}\bigg|\hrho_1,\ldots,\hrho_k,\rhoe_1,\ldots,\rhoe_k\bigg] \geq F_k$ is then a direct consequence of equation~\eqref{eq:appendix} given in appendix with  $r=\mr$, $s=\mr \mi$,  index $j$ corresponding to $p$, index $i$ to $(p,q)$, operators $L_i$ to $M_{p,q;k}$, density operators $\rho$ and $\sigma$ to $\hrho_k$ and $\rhoe_k$, respectively.

\end{proof}

\section{Example: Quantum Filter for the Photon-Box}\label{sec:example}
This section considers, as a key  illustration,  the  quantum filter design  in~\cite{Sayrin2011} to estimate in real-time  the state  $\rho$  of a quantized electro-magnetic field. Since this filter admits exactly the recursive form of Theorem~\ref{the:mainRes}, Theorem~\ref{the:stab} applies and thus, this filter is proved here to be stable versus its initial condition. 

The actual experiment under consideration uses quantum non-demolition measurements~\cite{Haroche2006} to estimate the state of the quantized field trapped in a superconducting microwave cavity. Circular Rydberg atoms are sent at discrete time intervals to perform partial measurements of the photon number. Atoms are subsequently detected either in their excited (e) or ground (g) state. The outcomes of these measurements are then used to estimate the state of the cavity field, thanks to the quantum filter described below. This estimation is eventually used to calculate the amplitude of classical fields injected in the cavity in order to bring the field closer to a predefined target state. The interested reader is directed to~\cite{Dotsenko2009} and~\cite{Sayrin2011} for further details of the experimental setup and results obtained.

The Hilbert space $\cH$ of the cavity is,  up to some  finite photon number truncation (around $10$), the Fock space
with basis $\{\ket{n}\}_{n\geq 0}$, each $\ket{n}$ being  the Fock state associated to exactly $n$ photons (photon-number state). The annihilation operator $\ba:\cH\mapsto \cH$ is defined by
$\ba \ket{n} = \sqrt{n} \ket{n-1}$ for $n\geq 1$ and $\ba\ket{0}=0$. Its Hermitian conjugate $\ba^\dag$ is the creation operator satisfying $\ba \ket{n} =\sqrt{n+1} \ket{n+1}$, for all $n\geq 0$. The photon-number operator (energy operator) is $\bn=\ba^\dag\ba$ which satisfies $\bn \ket{n} = n \ket{n}$, for all $n\geq 0$. Recall the commutation $[\ba,\ba^\dag]= \bid$.

In~\cite{Sayrin2011} the following imperfections have been considered:
\begin{itemize}

\item atomic preparation efficiency  characterized by $P_a(n_a) \geq 0 $,  the probability to have $n_a\in\{0,1,2\}$ atom(s) interact with the cavity: $P_a(0)+P_a(1)+P_a(2)=1$.

\item detection efficiency characterized by $\epsilon_d\in[0,1]$, the probability that the detector detects an atom when it is present.

\item state detection error rate $\eta_g\in[0,1]$ (resp. $\eta_e$)   probability of erroneous state
assignation to $e$ (resp. $g$)  when the atom collapses in  $g$ (resp. $e$).

\end{itemize}

The original  state $\rho$ is subject to  $\mi=3\times 7 $ possible quantum jumps and the available  sensors (atomic detector) admits only  $\mr=6$  possibilities. We begin by introducing some operators that are used to describe these quantum jumps:
\begin{align*}
   &D_\alpha = e^{\alpha \ba^\dag - \alpha^* \ba},\quad
   L_{no} =  \sqrt{P_a(0)}~\bid,\quad
   L_{g} = \sqrt{P_a(1)} \cos \phi_\bn,
   \\&
    L_{e} =\sqrt{P_a(1)} \sin \phi_\bn,\quad
    L_{gg} = \sqrt{P_a(2)} \cos^2 \phi_\bn,
    \\&
    L_{ge}=L_{eg}= \sqrt{P_a(2)} \cos \phi_\bn \sin \phi_\bn,\quad
    L_{ee} = \sqrt{P_a(2)} \sin^2 \phi_\bn,
    \\&
     L_{o}= 1 - \tfrac{\epsilon(1+2 n_{th})}{2}  \bn -  \tfrac{\epsilon n_{th}}{2}\bid,\quad
     L_{+} = \sqrt{\epsilon(1+n_{th})} \ba,
   \\&
    L_{-} = \sqrt{\epsilon n_{th}} \ba^\dag,
\end{align*}
where $\phi_\bn =\frac{\phi_0(\bn+1/2)+\phi_R}{2}$ and  $0<\epsilon, n_{th} \ll 1$, $\phi_0,\phi_R$ are real experimental parameters.
The unitary displacement operator $D_\alpha$  corresponds to the  control input  $\alpha\in\mathbb C$, depending on the sampling step $k$.
The    operators $L_{o}$, $L_+$ and $L_-$ correspond to the interaction of the cavity-field  with its environment (decoherence due to mirrors and   thermal photons):
\begin{enumerate}
\item $L_{o}$  corresponds to no photon jump;
\item$L_+$ corresponds to the capture  of  one thermal photon by the cavity-field;
\item $L_-$ corresponds to one photon lost  from  the cavity-field.
\end{enumerate}
Since $L_{o}^\dag L_{o} + L_{+}^\dag L_{+} + L_{-}^\dag L_{-}=\bid + O(\epsilon^2)$ and $\epsilon$ is small, we consider  in the sequel that $(L_{o},L_+,L_-)$ are associated to an effective Kraus map $ L_{o} \rho L_{o}^\dag  + L_{+} \rho L_{+}^\dag + L_{-} \rho L_{-}^\dag$.

The operators $L_{no}$, $L_{g}$, $L_e$, $L_{gg}$, $L_{ge}$, $L_{eg}$ and $L_{ee}$  correspond to the jump induced by the collapse  of  possible   crossing atom(s)  having interacted with the cavity-field:
\begin{enumerate}
\item $L_{no}$ - no  atom in the atomic sample;
\item $L_g$ - one atom having interacted with the cavity-field and collapsed to the atomic ground state during the detection process;
\item $L_e$ - one atom having interacted with the cavity-field and collapsed to the atomic excited  state during the detection process;
\item $L_{gg}$ - two atoms having interacted with the cavity-field, both having collapsed  to $g$;
\item $L_{ge}$ - two atoms having interacted with the cavity-field,   the first one  having collapsed to $g$ and the second   to $e$;
\item $L_{eg}$ - two atoms having interacted with the cavity-field,   the first one  having collapsed to $e$ and the second to $g$;
\item $L_{ee}$ - two atoms having interacted with the  cavity-field,  both having  collapsed  to $e$.
\end{enumerate}

For each control input $\alpha$, we have a total of $\mi=3\times 7$  Kraus operators. The jumps are labeled by  $q=(q^a,q^c)$ with $q^a\in \{no,g,e,gg,ge,eg,ee\}$ labeling atom related jumps  and $q^c\in\{o,+,-\}$ cavity decoherence jumps.  The Kraus operators associated to such $q$  are $M_q = L_{q^c} D_\alpha L_{q^a}$. So, for instance, with the  control input $\alpha_k$ at step $k$:
\begin{itemize}
\item the Kraus operator corresponding to one  atom collapsing in ground state, $q^a=g$, and   one photon  destroyed by mirrors, $q^c=-$, reads  $M^k_{(g,-)}= L_{-}D_{\alpha_k} L_{g}$.
\item the Kraus operator corresponding to two   atoms, the first one  collapsing to $g$, the second one to $e$, $q^a=ge$, and   one thermal photon  being caught between the two mirrors, $q^c=+$, reads  $M^k_{(ge,+)}= L_{+}D_{\alpha_k} L_{ge}$.
\end{itemize}
One can  check that, for any  value of $\alpha$,   these 21  operators define a Kraus map (using the assumption that $L_{o}^\dag L_{o} + L_{+}^\dag L_{+} + L_{-}^\dag L_{-}\approx \bid$).

We have only $\mr=6$ real detection possibilities $p\in\{no,g,e,gg,ge,ee\}$ corresponding respectively to no detection, a single detection in $g$, a single detection  in $e$, a double detection both  in $g$, a double detection one in $g$ and the other in $e$, and a double  detection both  in $e$. A double detection does not distinguish  two atoms.   The entries $\eta_{p,q}$ of the stochastic matrix describing the imperfect detections  corrupted by errors  are independent of $\alpha$ and  given by Table \ref{tb:stochastic}.   It relies on  error rate parameters $\eta_g$, $\eta_e$ and $\epsilon_d$ in $[0,1]$ and assume no  error correlation  between  atoms. So for instance the probability that there is a single atom in the sample, which collapses to the ground state and is in fact detected by the experimental sensors to be in the ground state is given by $\eta_{g,(g,q^c)} = \epsilon_d(1-\eta_g)$. Note that each column in the table sums up to one, since $\eta_{p,q}$ is a left stochastic matrix. 
    \begin{table*}[ht]
    \centerline{
    \begin{tabular}{|c||c|c|c|c|c|c|}
      \hline
    $p\setminus q$
          &$(no,q^c)$ & $(g,q^c)$ & $(e,q^c)$ &$(gg,q^c)$ &  $(ee,q^c)$ & $(ge,q^c)$ or $(eg,q^c)$
    \\\hline\hline
     $no$ & $1$& $1-\epsilon_d$& $1-\epsilon_d$  &  $(1-\epsilon_d)^2$& $(1-\epsilon_d)^2$&  $(1-\epsilon_d)^2$
    \\\hline
     $g$  & $0$& $\epsilon_d(1-\eta_g)$& $\epsilon_d\eta_e$ &  $2\epsilon_d(1-\epsilon_d)(1-\eta_g)$&  $2\epsilon_d(1-\epsilon_d)\eta_e$&   $\epsilon_d(1-\epsilon_d)(1-\eta_g+\eta_e)$
    \\\hline
     $e$  & $0$& $\epsilon_d\eta_g$&  $\epsilon_d(1-\eta_e)$&  $2\epsilon_d(1-\epsilon_d)\eta_g$& $2\epsilon_d(1-\epsilon_d)(1-\eta_e)$&   $\epsilon_d(1-\epsilon_d)(1-\eta_e+\eta_g)$
    \\\hline
     $gg$ & $0$& $0$& $0$ & $\epsilon_d^2(1-\eta_g)^2$& $\epsilon_d^2\eta_e^2$&  $\epsilon_d^2\eta_e(1-\eta_g)$
    \\\hline
     $ge$ & $0$& $0$& $0$&  $2\epsilon_d^2\eta_g(1-\eta_g)$& $2\epsilon_d^2\eta_e(1-\eta_e)$&   $\epsilon_d^2((1-\eta_g)(1-\eta_e)+\eta_g\eta_e)$
    \\\hline
     $ee$ & $0$& $0$& $0$&   $\epsilon_d^2\eta_g^2$& $\epsilon_d^2(1-\eta_e)^2$& $\epsilon_d^2\eta_g(1-\eta_e)$
    \\\hline
    \end{tabular}}
    \vline
    \caption{Stochastic matrix $\eta_{p,q}$ .}
    \label{tb:stochastic}
    \end{table*}

\section{Conclusion}
In this paper, we derive a recursive expression for the optimal estimate of a quantum system's state from imperfect, discrete-time measurements. The optimality of this  recursive expression  is proven by a simple application of Bayes' lemma and quantum measurement postulates. Such a filter is shown to satisfy a Markov property and thus can  be used  for quantum control as shown in~\cite{Sayrin2011} or to run Monte Carlo simulations of the measurement trajectories. We also demonstrate that this filter is stable with respect to initial conditions in the sense of Theorem~\ref{the:stab}. In the future, the continuous-time version of these results will be investigated.
\bibliographystyle{plain}
\bibliography{ref}

\appendix
\section{An inequality}
Consider a set of $s$ operators $(L_i)_{i=1,\ldots,s}$ on the finite dimensional  Hilbert space  $\cH$,  such that $L_i\neq 0$ for all $i$ and
$\sum_{i=1}^s L_i^\dag L_i=\bid$. Take a partition of $\{1,\ldots,s\}$ into $r\geq 1$ non-empty sub-sets $\left({\mathcal P}_j\right)_{j=1,\ldots, r}$.
Then, for any semi-definite positive operators $\rho$ and $\sigma$ on $\cH$ with  unit traces, the following inequality proved in~\cite{Rouchon2011} holds true:
\begin{align}
&F( \rho,\sigma)\leq \sum_{j=1}^{r}
\text{\scriptsize $\tr{\Sigma_{i\in{\mathcal P}_{j}}L_i\rho L_i^\dag}$}\times\nonumber\\&\qquad
  F \left(
   \tfrac{\sum_{i\in{\mathcal P}_{j}}L_i\rho L_i^\dag }{\tr{\sum_{i\in{\mathcal P}_{j}}L_i\rho L_i^\dag}}~,~
     \tfrac{\sum_{i\in{\mathcal P}_{j}}L_i \sigma L_i^\dag }{\tr{\sum_{i\in{\mathcal P}_{j}}L_i \sigma L_i^\dag}}
     \right),
      \label{eq:appendix}
\end{align}
where $F(\rho,\sigma)=F(\sigma,\rho)$ is the fidelity between $\sigma$ and $\rho$ defined as
$$
F(\rho,\sigma)=\left(\tr{\sqrt{\sqrt{\sigma}\rho\sqrt{\sigma}}} \right)^2
.
$$
When,  in the above sum, a  denominator $\tr{\sum_{i\in{\mathcal P}_{j}}L_i\rho L_i^\dag} $  depending on $\rho$  vanishes, the sum remains still valid since the corresponding value of  $F$  is multiplied by the same vanishing factor and $F$ is bounded since between  $0$ and $1$. This is no more true when a denominator $\tr{\sum_{i\in{\mathcal P}_{j}}L_i\sigma  L_i^\dag} $  depending on $\sigma$ vanishes. In this case the above inequality should be interpreted in the following way relying on a continuity argument.
For each $\sigma$, define
$$
\cZ_\sigma=\left\{j ~\bigg|~\tr{\sum_{i\in{\mathcal P}_{j}}L_i\sigma  L_i^\dag}=0 \right\} \subset \{1,\ldots,r\}.
$$
For almost all density operators  $\sigma$, $\cZ_\sigma=\emptyset$. Take $\epsilon >0$ and consider the positive definite density operator  $\sigma^\epsilon=\frac{\sigma + \epsilon \bid}{\tr{\sigma + \epsilon \bid}}$.
 For $j\in\{1,\ldots,r\}$, $\tr{\sum_{i\in{\mathcal P}_{j}}L_i\sigma^\epsilon  L_i^\dag}> 0$ since each $L_i\neq 0$. In particular,
 the set  of density operators $$
 \left(\tfrac{\sum_{i\in{\mathcal P}_{j}}L_i \sigma^\epsilon L_i^\dag }{\tr{\sum_{i\in{\mathcal P}_{j}}L_i \sigma^\epsilon L_i^\dag}}\right)_{j\in\cZ_\sigma}
 $$  admits at least an accumulation point when $\epsilon$ tends to $0^+$ ($\cH$ of finite dimension implies that the set of density operators is compact). Denote by $(\sigma^+_j)_{j\in\cZ_\sigma}$, such an accumulation point where each $\sigma^+_j$ is a density operator. Since for any $\epsilon >0$, inequality~\eqref{eq:appendix} holds true for $\rho$ and $\sigma^\epsilon$, $F$ is continuous, and for any $j\notin\cZ_\sigma$, $\tfrac{\sum_{i\in{\mathcal P}_{j}}L_i \sigma^\epsilon L_i^\dag }{\tr{\sum_{i\in{\mathcal P}_{j}}L_i \sigma^\epsilon L_i^\dag}}$ tends to $\tfrac{\sum_{i\in{\mathcal P}_{j}}L_i \sigma L_i^\dag }{\tr{\sum_{i\in{\mathcal P}_{j}}L_i \sigma L_i^\dag}}$, we have by continuity
 \begin{multline*}
 F(\rho,\sigma) \geq
    \sum_{j\in\cZ_\sigma}
\text{\scriptsize $\tr{\Sigma_{i\in{\mathcal P}_{j}}L_i\rho L_i^\dag}$}
  F \left(
   \tfrac{\sum_{i\in{\mathcal P}_{j}}L_i\rho L_i^\dag }{\tr{\sum_{i\in{\mathcal P}_{j}}L_i\rho L_i^\dag}}~,~
     \sigma^+_j
     \right)
     \\
     +
         \sum_{j\notin\cZ_\sigma}
\text{\scriptsize $\tr{\Sigma_{i\in{\mathcal P}_{j}}L_i\rho L_i^\dag}$}
  F \left(
   \tfrac{\sum_{i\in{\mathcal P}_{j}}L_i\rho L_i^\dag }{\tr{\sum_{i\in{\mathcal P}_{j}}L_i\rho L_i^\dag}}~,~
     \tfrac{\sum_{i\in{\mathcal P}_{j}}L_i \sigma L_i^\dag }{\tr{\sum_{i\in{\mathcal P}_{j}}L_i \sigma L_i^\dag}}
     \right)
     .
\end{multline*}
Such continuity argument show that we can extend inequality~\eqref{eq:appendix}  via the accumulation value(s) $\sigma^+_j$ when $\cZ_\sigma$ is not empty.

\end{document}